\providecommand{\U}[1]{\protect\rule{.1in}{.1in}}
\newtheorem{theorem}{Theorem}
\newtheorem{corollary}[theorem]{Corollary}
\newtheorem{lemma}[theorem]{Lemma}
\newenvironment{proof}[1][Proof]{\noindent\textbf{#1.} }{\ \rule{0.5em}{0.5em}}
\begin{document}

\title{A Linear-Optical Proof that the Permanent is $\mathsf{\#P}$-Hard}
\author{Scott Aaronson\thanks{MIT. \ Email: aaronson@csail.mit.edu. \ \ This material
is based upon work supported by the National Science Foundation under Grant
No. 0844626. \ Also supported by a DARPA YFA grant and a Sloan Fellowship.}}
\date{\textit{For Les Valiant, on the occasion of his Turing Award}}
\maketitle

\begin{abstract}
One of the crown jewels of complexity theory is Valiant's 1979 theorem that
computing the permanent of an $n\times n$\ matrix\ is $\mathsf{\#P}$-hard.
\ Here we show that, by using the model of \textit{linear-optical quantum
computing}---and in particular, a universality theorem due to Knill, Laflamme,
and Milburn---one can give a different and arguably more intuitive proof of
this theorem.

\end{abstract}

\section{Introduction\label{INTRO}}

Given an $n\times n$\ matrix $A=\left(  a_{i,j}\right)  $,\ the
\textit{permanent} of $A$ is defined as%
\[
\operatorname*{Per}\left(  A\right)  =\sum_{\sigma\in S_{n}}\prod_{i=1}%
^{n}a_{i,\sigma\left(  i\right)  }.
\]
A seminal result of Valiant \cite{valiant}\ says that computing
$\operatorname*{Per}\left(  A\right)  $ is $\mathsf{\#P}$-hard, if $A$ is a
matrix over (say) the integers, the nonnegative integers, or the set $\left\{
0,1\right\}  $.\footnote{See Hrubes, Wigderson, and Yehudayoff \cite{hwy} for
a recent, \textquotedblleft modular\textquotedblright\ presentation of
Valiant's proof (which also generalizes the proof to the noncommutative and
nonassociative case).} \ Here $\mathsf{\#P}$\ means (informally) the class of
\textit{counting problems}---problems that involve summing exponentially-many
nonnegative integers---and $\mathsf{\#P}$-hard\ means \textquotedblleft at
least as hard as any $\mathsf{\#P}$ problem.\textquotedblright\footnote{See
the Complexity Zoo (www.complexityzoo.com) for the definitions of
$\mathsf{\#P}$\ and other complexity classes used in this paper.}$^{,}%
$\footnote{If $A$ is a nonnegative integer matrix, then $\operatorname*{Per}%
\left(  A\right)  $\ is \textit{itself} a $\mathsf{\#P}$\ function, which
implies that it is $\mathsf{\#P}$\textit{-complete}\ (the term for functions
that are both $\mathsf{\#P}$-hard and in $\mathsf{\#P}$). \ If $A$ can have
negative or fractional entries, then strictly speaking $\operatorname*{Per}%
\left(  A\right)  $\ is no longer $\mathsf{\#P}$-complete, but it is still
$\mathsf{\#P}$-hard\ and computable in the class $\mathsf{FP}^{\mathsf{\#P}}%
$.}

More concretely, Valiant gave a polynomial-time algorithm that takes as input
an instance $\varphi\left(  x_{1},\ldots,x_{n}\right)  $\ of the Boolean
satisfiability problem, and that outputs a matrix $A_{\varphi}$\ such that
$\operatorname*{Per}\left(  A_{\varphi}\right)  $\ encodes the number of
satisfying assignments of $\varphi$. \ This means that computing the permanent
is at least as hard as counting satisfying assignments.

Unfortunately, the standard proof that the permanent is $\mathsf{\#P}$-hard is
notoriously opaque; it relies on a set of gadgets that seem to exist for
\textquotedblleft accidental\textquotedblright\ reasons. \ Could there be an
alternative proof that gave more, or at least different, insight? \ In this
paper, we try to answer that question by giving a new, quantum-computing-based
proof that the permanent is $\mathsf{\#P}$-hard. \ In particular, we will
derive the permanent's $\mathsf{\#P}$-hardness as a consequence of the
following three facts:

\begin{enumerate}
\item[(1)] \textit{Postselected linear optics} is capable of universal quantum
computation, as shown in a celebrated 2001 paper of Knill, Laflamme, and
Milburn\ \cite{klm} (henceforth referred to as KLM).\footnote{KLM actually
prove the stronger (and more practically-relevant) result that linear optics
with \textit{adaptive measurements} is capable of universal quantum
computation. \ For our purposes, however, we only need the weaker fact that
\textit{postselected} measurements suffice for universal QC, which KLM prove
as a lemma along the way to their main result.}

\item[(2)] Quantum computations can encode $\mathsf{\#P}$-hard\ quantities in
their amplitudes.

\item[(3)] Amplitudes in $n$-photon linear-optics circuits can be expressed as
the permanents of $n\times n$\ matrices.
\end{enumerate}

Even though our proof is based on quantum computing, we stress that we have
made it \textit{entirely self-contained}: all of the results we need
(including the KLM Theorem \cite{klm}, and even the construction of the
Toffoli gate from $1$-qubit and $\operatorname*{CSIGN}$\ gates) are proved in
this paper for completeness. \ We assume some familiarity with quantum
computing notation (e.g., kets and quantum circuit diagrams), but not with
linear optics.

\subsection{Motivation\label{MOTIV}}

If one counts the complexity of all of the individual pieces we
use---especially the universality results for quantum gates---then our
reduction from $\mathsf{\#P}$\ to the permanent ends up being at least as
complicated as Valiant's, and probably more so. \ In our view, however, this
is similar to how writing a program in C++ tends to produce a longer, more
complicated executable file than writing the same program in assembly
language. \ Normally, one also cares about the length and readability of the
\textit{source code}! \ Our purpose in this paper is to illustrate how quantum
computing provides a powerful \textquotedblleft high-level programming
language\textquotedblright\ in which one can, among other things, easily
rederive the most celebrated result in the theory of $\mathsf{\#P}$-hardness.

But why does the world need a new proof that the permanent is $\mathsf{\#P}%
$-hard---especially a proof invoking what some might consider to be exotic
concepts? \ Let us offer several answers:

\begin{itemize}
\item Any theorem as basic as the $\mathsf{\#P}$-hardness of the permanent
deserves several independent proofs. \ And our proof really is
\textquotedblleft independent\textquotedblright\ of the standard one: rather
than composing variable and clause gadgets,\footnote{Indeed, our proof does
not even go through the Cook-Levin Theorem: it reduces a $\mathsf{\#P}%
$\ computation directly to the permanent, without first reducing
$\mathsf{\#P}$\ to $\#3SAT$.} we \textit{multiply matrices} corresponding to
quantum gates, and use ideas from linear optics to keep track of how such
multiplications affect the permanent. \ One way to see the difference is that
our proof never uses the notion of a cycle cover.

\item While our proof, like the standard one, requires \textquotedblleft
gadgets\textquotedblright\ (one to simulate a Toffoli gate using
$\operatorname*{CSIGN}$\ gates, another to simulate a $\operatorname*{CSIGN}$
gate using postselected linear optics), the connection to quantum computing
gives those gadgets a natural \textit{semantics}. \ In other words, the
gadgets were introduced for \textquotedblleft practical\textquotedblright%
\ reasons having nothing to do with proving the permanent $\mathsf{\#P}$-hard,
and can be motivated independently of that goal. \ \textit{If} one already
knows the quantum universality gadgets, \textit{then} we offer what seems like
a major advance in complexity-theoretic pedagogy: a proof that the permanent
is $\mathsf{\#P}$-hard that can be reproduced on-the-spot from memory!

\item As Kuperberg \cite{kuperberg:jones}\ pointed out, by their nature, any
$\mathsf{\#P}$-hardness\ proofs (including ours) that are based on
\textquotedblleft quantum postselection\textquotedblright\ almost
\textit{immediately} yield hardness of approximation results as well.

\item We expect that the quantum postselection\ approach used here could lead
to $\mathsf{\#P}$-hardness proofs for many other problems---including problems
not already known to be $\mathsf{\#P}$-hard by other means. \ In this
direction, one natural place to look would be special cases of the permanent.
\end{itemize}

\subsection{Related Work\label{RELATED}}

By now, there are many examples where quantum computing has been used to give
new or simpler proofs of classical complexity theorems; see Drucker and de
Wolf \cite{ddw}\ for an excellent survey. \ Within the area of counting
complexity, Aaronson \cite{aar:pp} showed that the class $\mathsf{PP}$ is
equal to $\mathsf{PostBQP}$\ (quantum polynomial-time with postselection), and
then used that theorem to give a simpler proof of the landmark result of
Beigel, Reingold, and Spielman \cite{brs}\ that $\mathsf{PP}$\ is closed under
intersection. \ Later, also using the $\mathsf{PostBQP}=\mathsf{PP}$ theorem,
Kuperberg \cite{kuperberg:jones} gave a \textquotedblleft quantum
proof\textquotedblright\ of the result of Jaeger, Vertigan, and Welsh
\cite{jvw}\ that computing the Jones polynomial is $\mathsf{\#P}$-hard, and
even showed that a certain approximate version is $\mathsf{\#P}$-hard (which
had not been shown previously). \ Kuperberg's argument for the Jones
polynomial is conceptually similar to our argument for the permanent.

There is also precedent for using linear optics as a tool to prove theorems
about the permanent. \ Scheel \cite{scheel} observed that the unitarity of
linear-optical quantum computing implies the interesting fact that $\left\vert
\operatorname*{Per}\left(  U\right)  \right\vert \leq1$\ for all unitary
matrices $U$.

Rudolph \cite{rudolph} showed how to encode quantum amplitudes directly as
matrix permanents, and in the process, gave a \textquotedblleft
quantum-computing proof\textquotedblright\ that the permanent is
$\mathsf{\#P}$-hard. \ However, a crucial difference is that Rudolph starts
with Valiant's proof based on cycle covers, then recasts it in quantum terms
(with the goal of making Valiant's proof more accessible to a physics
audience). \ By contrast, our proof is independent of Valiant's; the tools we
use were invented for separate reasons in the quantum computing literature.

There has been a great deal of work on linear-optical quantum computing,
beyond the seminal KLM Theorem \cite{klm} on which this paper relies.
\ Recently, Aaronson and Arkhipov \cite{aark} studied the complexity of
sampling from a linear-optical computer's output distribution, assuming no
adaptive measurements are available. \ By using the $\mathsf{\#P}$-hardness of
the permanent as an \textquotedblleft input axiom,\textquotedblright\ they
showed that this sampling problem is classically intractable unless
$\mathsf{P}^{\mathsf{\#P}}=\mathsf{BPP}^{\mathsf{NP}}$. \ More relevant to
this paper is an alternative \textit{proof} that Aaronson and Arkhipov gave
for their result.\ \ Inspired by work of Bremner, Jozsa, and Shepherd
\cite{bjs}, the alternative proof combines Aaronson's $\mathsf{PostBQP}%
=\mathsf{PP}$\ theorem \cite{aar:pp}\ with the fact that postselected linear
optics is universal for $\mathsf{PostBQP}$, and thereby avoids any direct
appeal to the $\mathsf{\#P}$-hardness of the permanent. \ In retrospect, that
proof was already much of the way toward a linear-optical proof that the
permanent is $\mathsf{\#P}$-hard; this paper simply makes the connection explicit.

\section{Background\label{PRELIM}}

Not by accident, this section constitutes the bulk of the paper. \ First, in
Section \ref{CIRCUITS}, we fix some facts and notation about standard
(qubit-based) quantum computing. \ Then, in Section \ref{LOQC}, we give a
short overview of those aspects of linear-optical quantum computing that are
relevant for us, and (for completeness) prove the KLM Theorem in the specific
form we will need.

\subsection{Quantum Circuits\label{CIRCUITS}}

Abusing notation, we will often identify a quantum circuit $Q$ with the
unitary transformation that it induces: for example, $\left\langle
0\cdots0\right\vert Q\left\vert 0\cdots0\right\rangle $\ represents the
amplitude with which $Q$ maps its initial state to itself. \ We use
$\left\vert Q\right\vert $\ to denote the number of gates in $Q$.

The first ingredient we need for our proof is a convenient set of quantum
gates (in the standard qubit model). \ Thus, let $\mathcal{G}$\ be the set of
gates consisting of (1) all $1$-qubit gates, and (2) the $2$-qubit
\textit{controlled-sign} gate%
\[
\operatorname*{CSIGN}=\left(
\begin{array}
[c]{cccc}%
1 & 0 & 0 & 0\\
0 & 1 & 0 & 0\\
0 & 0 & 1 & 0\\
0 & 0 & 0 & -1
\end{array}
\right)  ,
\]
which flips the amplitude if and only if both qubits are\ $\left\vert
1\right\rangle $.\footnote{A more common $2$-qubit gate than
$\operatorname*{CSIGN}$\ is the \textit{controlled-NOT} ($\operatorname*{CNOT}%
$) gate, which maps each basis state $\left\vert x,y\right\rangle $\ to
$\left\vert x,y\oplus x\right\rangle $. \ However, $\operatorname*{CSIGN}$\ is
more convenient for linear-optics purposes, and is equivalent to
$\operatorname*{CNOT}$\ by conjugating the second qubit with a Hadamard gate.}
\ Then Barenco et al.\ \cite{barenco} showed that $\mathcal{G}$\ is a
\textit{universal} set of quantum gates, in the sense that $\mathcal{G}$
generates \textit{any} unitary transformation on any number of qubits (without
error). \ For our purposes, however, the following weaker result suffices.

\begin{lemma}
\label{csignlem}$\mathcal{G}$ generates the Toffoli gate, the $3$-qubit gate
that maps each basis state $\left\vert x,y,z\right\rangle $\ to $\left\vert
x,y,z\oplus xy\right\rangle $.
\end{lemma}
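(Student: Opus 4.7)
The plan is to build the Toffoli gate from $\mathcal{G}$ in two layers. First I would reduce to the more familiar gate CNOT by noting that $(I \otimes H)\,\mathrm{CSIGN}\,(I \otimes H) = \mathrm{CNOT}$, since $HZH = X$; because $H$ is a $1$-qubit gate and hence in $\mathcal{G}$, this shows $\mathrm{CNOT} \in \langle \mathcal{G} \rangle$.

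Second, I would invoke the standard Sleator--Weinfurter decomposition. Choose a $1$-qubit gate $V$ with $V^2 = X$, for instance $V = \tfrac{1+i}{2}\,I + \tfrac{1-i}{2}\,X$. Then the identity
$$\mathrm{Toffoli}_{\{1,2\}\to 3} \;=\; \Lambda(V)_{2\to 3}\,\mathrm{CNOT}_{1\to 2}\,\Lambda(V^{\dagger})_{2\to 3}\,\mathrm{CNOT}_{1\to 2}\,\Lambda(V)_{1\to 3}$$
holds, where $\Lambda(U)_{c\to t}$ denotes a controlled-$U$ with control $c$ and target $t$. The verification is a case analysis on the control bits $(x,y)$ of $|x,y,z\rangle$: whenever not both of them equal $1$, the three $V$'s on the target collapse to either $VV^{\dagger} = I$ or $V^{\dagger}V = I$; when $x = y = 1$, the middle $\Lambda(V^{\dagger})$ is disabled because the surrounding CNOTs have temporarily flipped qubit $2$ to $0$, so qubit $3$ sees $V \cdot V = X$, precisely the desired action. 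Each of the controlled-$V$ or controlled-$V^{\dagger}$ gates is in turn expanded by Barenco's identity: any $W \in \mathrm{SU}(2)$ factors as $W = A\,X\,B\,X\,C$ with $ABC = I$ (from an Euler-angle decomposition), whence $\Lambda(W) = (I \otimes A)\,\mathrm{CNOT}\,(I \otimes B)\,\mathrm{CNOT}\,(I \otimes C)$; every CNOT that appears is then replaced by the Hadamard-conjugated CSIGN from the first step.

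The main obstacle is purely bookkeeping: verifying the Sleator--Weinfurter identity and writing down Barenco's factorization $V = AXBXC$ for the specific $V$ chosen above are routine but finicky matrix computations. There is no deeper conceptual difficulty. The single piece of ingenuity is the choice of $V$ as a square root of $X$, so that the Toffoli's target action $z \mapsto z \oplus xy$ factors as $V \cdot V$ distributed across the two separate controls.
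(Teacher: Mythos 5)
Your construction is correct in outline but takes a genuinely different route from the paper, which simply exhibits (and reproduces from Nielsen and Chuang) an explicit six-$\operatorname*{CSIGN}$ circuit built from Hadamards and a fixed $1$-qubit gate $B=\cos(\pi/8)I+i\sin(\pi/8)X$, and asks the reader to verify it. You instead \emph{derive} a circuit: reduce $\operatorname*{CSIGN}$ to $\mathrm{CNOT}$ by Hadamard conjugation, apply the Sleator--Weinfurter identity with a square root $V$ of $X$, and expand each controlled-$V$ by the Barenco $A$-$X$-$B$-$X$-$C$ trick. Your case analysis of the Sleator--Weinfurter identity is right, and your $V=\tfrac{1+i}{2}I+\tfrac{1-i}{2}X$ does satisfy $V^{2}=X$. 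The trade-off is that your route costs eight $\operatorname*{CSIGN}$s rather than six (irrelevant for the $\mathsf{\#P}$-hardness application, where the $\operatorname*{CSIGN}$ count only enters as the exponent $\Gamma$ and need only be polynomial) but is reconstructible from memory, which is in the spirit of the paper's stated pedagogical goals.

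There is one real gap in the last step. The $ABC$ lemma you quote is for $W\in\mathrm{SU}(2)$, but no square root of $X$ lies in $\mathrm{SU}(2)$: $\det(V)^{2}=\det(X)=-1$ forces $\det(V)=\pm i$, and indeed your $V$ has determinant $i$. You therefore need the $U(2)$ form of Barenco's lemma, $W=e^{i\delta}AXBXC$ with $ABC=I$, which gives $\Lambda(W)=\bigl(\mathrm{diag}(1,e^{i\delta})\otimes A\bigr)\,\mathrm{CNOT}\,(I\otimes B)\,\mathrm{CNOT}\,(I\otimes C)$, the extra diagonal phase gate acting on the \emph{control} qubit. This phase cannot be waved away: if you silently replace $V$ by its special-unitary part $e^{-i\pi/4}V$, the $(x,y)=(1,1)$ branch of the Sleator--Weinfurter circuit produces $e^{-i\pi/2}X=-iX$ on the target, i.e.\ a Toffoli composed with a controlled-controlled-$(-i)$ phase, which is not the Toffoli gate. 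Since the correction is a $1$-qubit gate and hence lies in $\mathcal{G}$, the proof survives, but the phase bookkeeping must be done explicitly.
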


\begin{proof}
The circuit can be found in Nielsen and Chuang \cite{nc}\ for example, but we
reproduce it in Figure \ref{toffolifig} for completeness. \ In the diagram,%
\[
H=\frac{1}{\sqrt{2}}\left(
\begin{array}
[c]{cc}%
1 & 1\\
1 & -1
\end{array}
\right)
\]
is the Hadamard gate,%
\[
B=\frac{1}{2}\left(
\begin{array}
[c]{cc}%
\sqrt{2+\sqrt{2}} & i\sqrt{2-\sqrt{2}}\\
i\sqrt{2-\sqrt{2}} & \sqrt{2+\sqrt{2}}%
\end{array}
\right)
\]
is another $1$-qubit gate, and the six vertical bars represent
$\operatorname*{CSIGN}$\ gates.%
%TCIMACRO{\FRAME{ftbpFU}{4.5766in}{0.8354in}{0pt}{\Qcb{Simulating a Toffoli
%gate using $\operatorname*{CSIGN}$ and $1$-qubit gates.}}{\Qlb{toffolifig}%
%}{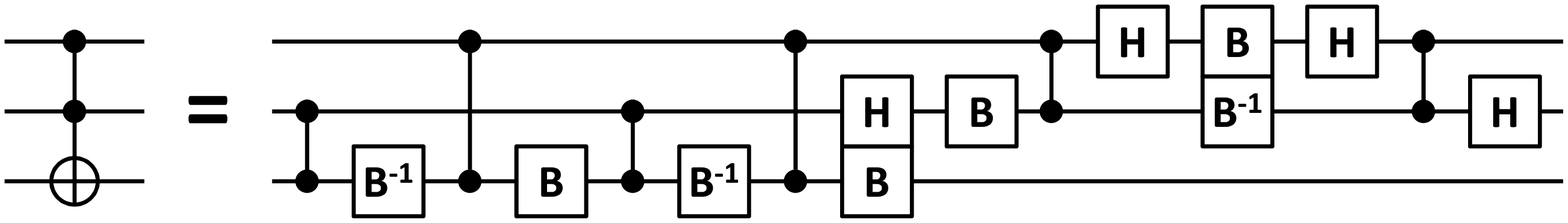}{\special{ language "Scientific Word";  type "GRAPHIC";
%maintain-aspect-ratio TRUE;  display "USEDEF";  valid_file "F";
%width 4.5766in;  height 0.8354in;  depth 0pt;  original-width 13.6995in;
%original-height 8.1829in;  cropleft "0.0903";  croptop "1";
%cropright "0.9096";  cropbottom "0.7565";
%filename 'toffolifig.eps';file-properties "XNPEU";}}}%
%BeginExpansion
\begin{figure}[ptb]%
\centering
\includegraphics[
trim=1.237065in 6.190364in 1.238435in 0.000000in,
height=0.8354in,
width=4.5766in
]%
{}%
\caption{Simulating a Toffoli gate using $\operatorname*{CSIGN}$ and $1$-qubit
gates.}%
\label{toffolifig}%
\end{figure}
%EndExpansion

\end{proof}

\subsection{Linear-Optical Quantum Computing\label{LOQC}}

We now give a brief overview of \textit{linear-optical quantum computing}
(LOQC), an alternative quantum computing model based on identical photons
rather than qubits. \ For a detailed introduction to LOQC from a computer
science perspective, see Aaronson and Arkhipov \cite{aark}.

In LOQC, each basis state of our quantum computer has the form $\left\vert
S\right\rangle =\left\vert s_{1},\ldots,s_{m}\right\rangle $, where
$s_{1},\ldots,s_{m}$\ are nonnegative integers summing to $n$. \ Here $s_{i}%
$\ represents the number of photons in the $i^{th}$\ location or
\textquotedblleft mode,\textquotedblright\ and\ the fact that $s_{1}%
+\cdots+s_{m}=n$\ means that photons are never created or destroyed. \ One
should think of $m$\ and $n$\ as both polynomially-bounded. \ For this paper,
it will be convenient to assume that $m$ is even, that $n=m/2$, and that the
initial state has the form $\left\vert I\right\rangle =\left\vert
0,1,0,1,\ldots,0,1\right\rangle $: that is, one photon in each even-numbered
mode, and no photons in the odd-numbered modes.

Let $\Phi_{m,n}$\ be the set of nonnegative integer tuples $S=\left(
s_{1},\ldots,s_{m}\right)  $\ such that $s_{1}+\cdots+s_{m}=n$, and let
$\mathcal{H}_{m,n}$\ be the Hilbert space spanned by basis states $\left\vert
S\right\rangle $\ with $S\in\Phi_{m,n}$. \ Then a general state in LOQC is
just a unit vector in $\mathcal{H}_{m,n}$:%
\[
\left\vert \psi\right\rangle =\sum_{S\in\Phi_{m,n}}\alpha_{S}\left\vert
S\right\rangle
\]
with $\sum_{S\in\Phi_{m,n}}\left\vert \alpha_{S}\right\vert ^{2}=1$.

To transform $\left\vert \psi\right\rangle $, one can select any $m\times
m$\ unitary transformation $U=\left(  u_{ij}\right)  $. \ This $U$ then
induces a larger unitary transformation $\varphi\left(  U\right)  $\ on the
Hilbert space $\mathcal{H}_{m,n}$ of $n$-photon states. \ There are several
ways to define $\varphi\left(  U\right)  $, but perhaps the simplest is the
following formula:%
\begin{equation}
\left\langle S|\varphi\left(  U\right)  |T\right\rangle =\frac
{\operatorname*{Per}\left(  U_{S,T}\right)  }{\sqrt{s_{1}!\cdots s_{m}%
!t_{1}!\cdots t_{m}!}} \tag{*}\label{perform}%
\end{equation}
for all tuples $S=\left(  s_{1},\ldots,s_{m}\right)  $\ and $T=\left(
t_{1},\ldots,t_{m}\right)  $ in $\Phi_{m,n}$. \ Here $U_{S,T}$\ is the
$n\times n$\ matrix obtained from $U$ by taking $s_{i}$\ copies of the
$i^{th}$\ row of $U$ and $t_{j}$\ copies of the $j^{th}$\ column,\ for all
$i,j\in\left[  m\right]  $. \ To illustrate, if%
\[
U=\left(
\begin{array}
[c]{cc}%
1 & 0\\
0 & -1
\end{array}
\right)
\]
and $\left\vert S\right\rangle =\left\vert T\right\rangle =\left\vert
2,1\right\rangle $, then%
\[
U_{S,T}=\left(
\begin{array}
[c]{ccc}%
1 & 1 & 0\\
1 & 1 & 0\\
0 & 0 & -1
\end{array}
\right)  .
\]

Intuitively, the reason the permanent arises in formula (*) is that there are
$n!$\ ways of mapping the $n$\ photons in basis state $\left\vert
S\right\rangle $\ onto the $n$ photons in basis state $\left\vert
T\right\rangle $. \ Since the photons are \textit{identical bosons}, quantum
mechanics says that each of those $n!$\ ways contributes a term to the total
$\left\langle S|\varphi\left(  U\right)  |T\right\rangle $, with the
contribution given by the product of the transition amplitudes $u_{ij}$\ for
each of the $n$ photons individually.

It turns out that $\varphi\left(  U\right)  $\ is always unitary and that
$\varphi$\ is a homomorphism. \ Both facts seem surprising viewed purely as
algebraic consequences of formula (*), but of course they have natural
physical interpretations: $\varphi\left(  U\right)  $\ is unitary because it
represents an actual physical transformation that can be applied, and
$\varphi$\ is a homomorphism because generalizing from one photon to $n$
photons must commute with composing beamsplitters. \ In this paper, we will
not need that $\varphi\left(  U\right)  $\ is unitary; see Aaronson and
Arkhipov \cite{aark}\ for a proof of that fact. \ Below we prove that
$\varphi$\ is a homomorphism.

\begin{lemma}
\label{homlem}$\varphi$ is a homomorphism.
\end{lemma}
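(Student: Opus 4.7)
The plan is to verify the homomorphism property directly on basis states: I will show that $\langle S \mid \varphi(UV) \mid T\rangle = \langle S \mid \varphi(U)\varphi(V) \mid T\rangle$ for all $S,T \in \Phi_{m,n}$. Inserting the resolution of identity $\sum_{R \in \Phi_{m,n}} |R\rangle\langle R|$ between $\varphi(U)$ and $\varphi(V)$ on the right-hand side and applying formula (*) termwise, this reduces to proving the purely combinatorial identity
\[
\operatorname{Per}\bigl((UV)_{S,T}\bigr) \;=\; \sum_{R \in \Phi_{m,n}} \frac{\operatorname{Per}(U_{S,R})\,\operatorname{Per}(V_{R,T})}{r_{1}!\cdots r_{m}!},
\]
since the matching $\sqrt{s!\,t!}$ denominators on both sides cancel and the two $\sqrt{r!}$ factors coming from the intermediate state combine into the single $r_{1}!\cdots r_{m}!$.

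To prove this identity, I will fix sequences $\mathbf{i} = (i_{1},\ldots,i_{n})$ and $\mathbf{j} = (j_{1},\ldots,j_{n})$ that list the row and column labels of $(UV)_{S,T}$ with multiplicities $S$ and $T$ respectively. Expanding the left-hand side by the definition of permanent and substituting $(UV)_{ij} = \sum_{\ell} U_{i\ell}V_{\ell j}$, I get
\[
\operatorname{Per}\bigl((UV)_{S,T}\bigr) \;=\; \sum_{\sigma \in S_{n}} \sum_{\boldsymbol{\ell}\in [m]^{n}} \prod_{k=1}^{n} U_{i_{k},\ell_{k}}\,V_{\ell_{k},j_{\sigma(k)}}.
\]
The next step is to group intermediate multi-indices $\boldsymbol{\ell}$ by their type $R = (r_{1},\ldots,r_{m})$, where $r_{\ell} = |\{k : \ell_{k} = \ell\}|$. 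For a canonical ordering $\boldsymbol{\ell}_{R}$ of type $R$, each element of the symmetric group $S_{n}$ acting by permutation of coordinates produces a sequence of type $R$, and each such sequence arises exactly $r_{1}!\cdots r_{m}!$ times (the order of the stabilizer).

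The remaining step is to recognize the two permanents in the grouped sum. After the change of variables, the inner sum over permutations of $\boldsymbol{\ell}_{R}$ together with the original $\sigma$ decouples: one factor depends only on the $U$-entries and runs over permutations of the row indices against $\boldsymbol{\ell}_{R}$, reassembling into $\operatorname{Per}(U_{S,R})$; the other depends only on the $V$-entries and runs over permutations of $\boldsymbol{\ell}_{R}$ against $\mathbf{j}$, reassembling into $\operatorname{Per}(V_{R,T})$. The $1/(r_{1}!\cdots r_{m}!)$ from the stabilizer count is exactly the denominator in the target identity. The main obstacle is this final bookkeeping: one has to factor a single permutation $\sigma$ and a sum over $\boldsymbol{\ell}$ into a pair of independent permutations, one for each permanent, and confirm that the multiplicity from the stabilizer converts the naive overcounting into exactly one factor of $r_{1}!\cdots r_{m}!$. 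Once that combinatorial accounting is done, the identity follows and $\varphi(UV) = \varphi(U)\varphi(V)$ is established on every basis pair, hence on $\mathcal{H}_{m,n}$.
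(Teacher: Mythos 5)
Your proposal is correct and follows essentially the same route as the paper: reduce via formula (*) to the identity $\operatorname{Per}((UV)_{S,T})=\sum_{R}\operatorname{Per}(U_{S,R})\operatorname{Per}(V_{R,T})/(r_{1}!\cdots r_{m}!)$, expand the matrix product inside the permanent, group the intermediate indices by their type $R$, and divide by the stabilizer size $r_{1}!\cdots r_{m}!$. The only difference is that you carry out the bookkeeping for general $S,T$ (and make the factorization of $\sigma$ and $\boldsymbol{\ell}$ into two independent permutations explicit), whereas the paper verifies only the special case $S=T=(1,\ldots,1)$ and declares the general case analogous.
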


\begin{proof}
We want to show that for all tuples $S,T\in\Phi_{m,n}$\ and all $m\times
m$\ unitaries $U,V$,%
\[
\left\langle S|\varphi\left(  VU\right)  |T\right\rangle =\left\langle
S|\varphi\left(  V\right)  \varphi\left(  U\right)  |T\right\rangle
=\sum_{R\in\Phi_{m,n}}\left\langle S|\varphi\left(  V\right)  |R\right\rangle
\left\langle R|\varphi\left(  U\right)  |T\right\rangle .
\]
By equation (*), the above is equivalent (after multiplying both sides by
$\sqrt{s_{1}!\cdots s_{m}!t_{1}!\cdots t_{m}!}$) to the identity%
\begin{equation}
\operatorname*{Per}\left(  \left(  VU\right)  _{S,T}\right)  =\sum_{R\in
\Phi_{m,n}}\frac{\operatorname*{Per}\left(  V_{S,R}\right)
\operatorname*{Per}\left(  U_{R,T}\right)  }{r_{1}!\cdots r_{m}!}. \tag{**}%
\end{equation}
We will prove identity (**) in the special case $n=m$\ and $S=T=I=\left(
1,1,\ldots,1\right)  $, since the general case is analogous. \ We have%
\begin{align*}
\operatorname*{Per}\left(  VU\right)   &  =\sum_{\sigma\in S_{n}}\prod
_{i=1}^{n}\left(  VU\right)  _{i,\sigma\left(  i\right)  }\\
&  =\sum_{R\in\Phi_{n,n}}\left(  \frac{1}{r_{1}!\cdots r_{n}!}\sum_{\tau
,\xi\in S_{n}}\prod_{i=1}^{n}\left(  V_{I,R}\right)  _{i,\xi\left(  i\right)
}\left(  U_{R,I}\right)  _{i,\tau\left(  i\right)  }\right) \\
&  =\sum_{R\in\Phi_{n,n}}\frac{\operatorname*{Per}\left(  V_{I,R}\right)
\operatorname*{Per}\left(  U_{R,I}\right)  }{r_{1}!\cdots r_{n}!}.
\end{align*}
In the second line above, we decomposed the sum by thinking about each
permutation $\sigma\in S_{n}$\ as a product of \textit{two} permutations: one,
$\tau$, that maps $n$ particles in the initial configuration $\left\vert
I\right\rangle $\ to $n$ particles in the intermediate configuration
$\left\vert R\right\rangle $ when $U$ is applied, and another, $\xi$, that
maps $n$ particles in the intermediate configuration $\left\vert
R\right\rangle $\ to $n$ particles in the final configuration $\left\vert
I\right\rangle $ when $V$\ is applied. \ This yields the same result, as long
as we remember to sum over all possible intermediate configurations $R\in
\Phi_{n,n}$, and also to divide each summand by $r_{1}!\cdots r_{n}!$, which
is the size of $R$'s automorphism group (i.e., the\ number of ways to permute
the $n$\ particles within $\left\vert R\right\rangle $\ that leave $\left\vert
R\right\rangle $\ unchanged). \ 
\end{proof}

In the standard qubit model, every unitary transformation can be decomposed as
a product of \textit{gates}, each of which acts nontrivially on only $1$ or
$2$ qubits. \ Similarly, in LOQC, every unitary transformation can be
decomposed as a product of \textit{linear-optics gates}, each of which acts
nontrivially on only $1$ or $2$ modes. \ Then a \textit{linear-optics circuit}
is simply a list of linear-optics gates applied to specified modes (or pairs
of modes) starting from the initial state $\left\vert I\right\rangle
=\left\vert 0,1,\ldots,0,1\right\rangle $.\footnote{A crucial difference
between standard quantum circuits and linear-optics circuits is that, whereas
a standard quantum gate is the \textit{tensor product} of a small (say
$4\times4$) unitary matrix with an exponentially-large (say $2^{n-2}%
\times2^{n-2}$) identity matrix, a linear-optics gate is the \textit{direct
sum} of a small (say $2\times2$) unitary matrix with a \textit{polynomially}%
-large (say $\left(  m-2\right)  \times\left(  m-2\right)  $) identity matrix.
It is only the homomorphism $U\rightarrow\varphi\left(  U\right)  $\ that
produces exponentially-large matrices. \ One consequence, pointed out by Reck
et al.\ \cite{rzbb}, is that, whereas most $n$-qubit unitary transformations
require $\Omega\left(  2^{2n}\right)  $ gates to implement (as follows from an
easy dimension argument), every $m$-mode unitary transformation $U$\ can be
implemented using only $O\left(  m^{2}\right)  $\ linear-optics gates.}

The last notion we need is that of \textit{postselected LOQC}. \ In our
context, postselection simply means measuring the number of photons in a given
mode $i$, and conditioning on a particular result (for example, $0$ photons,
or $1$ photon). \ After we postselect on the number of photons in some mode,
we will never use that mode for further computation.\footnote{In physics
language, all photon-number measurements are assumed to be \textquotedblleft
demolition\textquotedblright\ measurements.} \ For this reason, without loss
of generality, we can defer all postselected measurements until the end of the computation.

Our $\mathsf{\#P}$-hardness\ proof will fall out as a corollary of the
following universality theorem, which is implicit in the work of KLM
\cite{klm}. \ Indeed, we \textit{could} just appeal to the KLM construction as
a \textquotedblleft black box,\textquotedblright\ but we choose not to do so,
since the properties of the construction that we want are slightly different
from the properties KLM want, and we wish to verify in detail that the desired
properties hold.

\begin{theorem}
[following KLM \cite{klm}]\label{klmthm}Postselected linear optics can
simulate universal quantum computation. \ More concretely: there exists a
polynomial-time classical algorithm that converts a quantum circuit $Q$ over
the gate set $\mathcal{G}$\ into a linear-optics circuit $L$, so that%
\[
\left\langle I\right\vert \varphi\left(  L\right)  \left\vert I\right\rangle
=\frac{\left\langle 0\cdots0\right\vert Q\left\vert 0\cdots0\right\rangle
}{4^{\Gamma}},
\]
where $\Gamma$ is the number of $\operatorname*{CSIGN}$\ gates in $Q$ and
$\left\vert I\right\rangle =\left\vert 0,1,\ldots,0,1\right\rangle $ is the
standard initial state.
\end{theorem}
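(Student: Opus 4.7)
The plan is to follow KLM's original scheme, adapted to the particular initial-state convention and scalar accounting of Theorem~\ref{klmthm}. I dual-rail encode each of the $n$ qubits of $Q$ in two linear-optics modes carrying exactly one photon, choosing the encoding so that the $n$-qubit input $|0\cdots 0\rangle$ matches $|I\rangle = |0,1,\ldots,0,1\rangle$ on $m = 2n$ computational modes; any ancilla modes introduced by the CSIGN gadget will likewise be prepared (and postselected back) to agree with the $|I\rangle$ pattern. Three ingredients then complete the proof: an exact simulation of 1-qubit gates, a postselected gadget for the nonlinear-sign gate $\mathrm{NS}$, and a composition of two $\mathrm{NS}$ gates that realizes $\operatorname{CSIGN}$.

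For the 1-qubit gates, given a $2\times 2$ unitary $U$ acting on the dual-rail qubit occupying a pair of modes, I apply the 2-mode linear-optics gate with matrix $U$ (or, depending on the encoding convention, $P U P$ where $P$ swaps the two modes) on those modes. A direct application of formula~(\ref{perform}) to the single-photon subspace of those modes shows that $\varphi$ of this gate acts as $U$ on the logical qubit, with no scalar factor and no leakage outside the single-photon sector. Thus every 1-qubit gate of $Q$ is simulated exactly, and the gadgets can be concatenated via Lemma~\ref{homlem}.

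The heart of the proof is the $\operatorname{CSIGN}$ gadget, built from the 1-mode nonlinear-sign gate
\[
\mathrm{NS} : \; \alpha|0\rangle + \beta|1\rangle + \gamma|2\rangle \;\mapsto\; \alpha|0\rangle + \beta|1\rangle - \gamma|2\rangle .
\]
First I reproduce KLM's postselected implementation of $\mathrm{NS}$: augment the signal mode with two ancilla modes in a fixed Fock state, apply a specific $3\times 3$ unitary $V$ (the one with entries involving $\sqrt{2}$ that KLM write down explicitly), and postselect the ancillas back to their initial occupation. Applying formula~(\ref{perform}) to the ancilla indices, I verify in all nine cases $s_1,t_1\in\{0,1,2\}$ that the resulting effective operator on the signal mode equals exactly $\tfrac{1}{2}\cdot\mathrm{NS}$ (in particular, no cross-terms appear between photon-number sectors). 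Second, I implement $\operatorname{CSIGN}$ on two dual-rail qubits by applying a $50/50$ beamsplitter between the two ``$|1\rangle$-modes'' of the two qubits, then an $\mathrm{NS}$ gate on each of those modes, then the inverse beamsplitter. A four-case check on the logical basis shows that only the input $|1\rangle_L|1\rangle_L$ is nontrivially affected: the beamsplitter Hong--Ou--Mandel-merges the two photons into a superposition of $|2,0\rangle$ and $|0,2\rangle$, each $\mathrm{NS}$ flips the sign of its $|2\rangle$ component, and the inverse beamsplitter rebuilds $|1,1\rangle$ with an overall $-1$; the other three logical inputs carry only $0$- or $1$-photon amplitudes on the two affected modes, on which $\mathrm{NS}$ is the identity. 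Combined with the two postselection factors of $\tfrac{1}{2}$, the gadget realizes $\tfrac{1}{4}\cdot\operatorname{CSIGN}$.

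Concatenating all gadgets in the order of the gates of $Q$ and using Lemma~\ref{homlem} to multiply, the resulting linear-optics circuit $L$ has size polynomial in $|Q|$ (each gate becomes an $O(1)$-size gadget on $O(1)$ modes and ancillas) and satisfies $\langle I|\varphi(L)|I\rangle = \langle 0\cdots 0|Q|0\cdots 0\rangle\cdot (1/4)^\Gamma$, which is precisely the identity asserted by Theorem~\ref{klmthm}. The main obstacle I anticipate is the $\mathrm{NS}$-gate verification: one must write down the explicit $3\times 3$ matrix $V$ and grind formula~(\ref{perform}) nine times to confirm that the effective operator on the signal mode really is $\tfrac{1}{2}\cdot\mathrm{NS}$, with no off-sector leakage. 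Once this is established, the Hong--Ou--Mandel calculation behind the $\operatorname{CSIGN}$ gadget and the final scalar bookkeeping are routine.
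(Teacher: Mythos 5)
Your proposal is correct and follows essentially the same route as the paper's proof: dual-rail encoding, exact $2\times 2$ simulation of $1$-qubit gates, the postselected $\mathrm{NS}$ gadget contributing amplitude $1/2$, and the beamsplitter--$\mathrm{NS}$--beamsplitter construction of $\operatorname{CSIGN}$ with overall amplitude $1/4$, composed via Lemma \ref{homlem}. The only cosmetic difference is that the paper shortens your nine-case $\mathrm{NS}$ verification by noting that photon-number conservation forces the effective map on the signal mode to be diagonal, so only the three diagonal entries $\lambda_0,\lambda_1,\lambda_2$ need to be computed via formula (*).
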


\begin{proof}
To encode a (qubit-based) quantum circuit by a postselected linear-optics
circuit, KLM use the so-called \textit{dual-rail representation} of a qubit
using two optical modes. \ In this representation, the qubit $\left\vert
0\right\rangle $\ is represented as $\left\vert 0,1\right\rangle $, while the
qubit $\left\vert 1\right\rangle $\ is represented as $\left\vert
1,0\right\rangle $. \ Thus, to simulate a quantum circuit that acts on $k$
qubits, we need $2k$\ optical modes. \ (We will also need additional modes to
handle postselection, but we can ignore those for now.) \ Let the modes
corresponding to qubit $i$ be labeled $\left(  i,0\right)  $\ and\ $\left(
i,1\right)  $\ respectively. \ Notice that the initial state $\left\vert
0\cdots0\right\rangle $\ in the qubit model maps onto the initial state
$\left\vert I\right\rangle $\ in the optical model.

Since $\varphi$\ is a homomorphism by Lemma \ref{homlem}, to prove the theorem
it suffices to show how to simulate the gates in $\mathcal{G}$. \ Simulating a
$1$-qubit gate is easy: simply apply the appropriate $2\times2$\ unitary
transformation to the Hilbert space spanned by $\left\vert 0,1\right\rangle
$\ and $\left\vert 1,0\right\rangle $. \ The interesting part is how to
simulate a $\operatorname*{CSIGN}$ gate. \ To do so, KLM use another gate that
they call $\operatorname*{NS}\nolimits_{1}$, which applies the following
unitary transformation to a single mode:%
\[
\operatorname*{NS}\nolimits_{1}:\alpha_{0}\left\vert 0\right\rangle
+\alpha_{1}\left\vert 1\right\rangle +\alpha_{2}\left\vert 2\right\rangle
\rightarrow\alpha_{0}\left\vert 0\right\rangle +\alpha_{1}\left\vert
1\right\rangle -\alpha_{2}\left\vert 2\right\rangle .
\]
(We do not care how $\operatorname*{NS}\nolimits_{1}$\ acts on $\left\vert
3\right\rangle $, $\left\vert 4\right\rangle $, and so on, since those basis
states will never arise in our simulation.) \ Using $\operatorname*{NS}%
\nolimits_{1}$, it is not hard to simulate $\operatorname*{CSIGN}$ on two
qubits $i$ and $j$. \ The procedure, shown in Figure \ref{ns1fig}, is this:
first apply a Hadamard transformation to modes $\left(  i,0\right)  $\ and
$\left(  j,0\right)  $.%
%TCIMACRO{\FRAME{ftbpFU}{3.6149in}{0.7939in}{0pt}{\Qcb{Simulating
%$\operatorname*{CSIGN}$\ by $\operatorname*{NS}\nolimits_{1}$\ and Hadamard.}%
%}{\Qlb{ns1fig}}{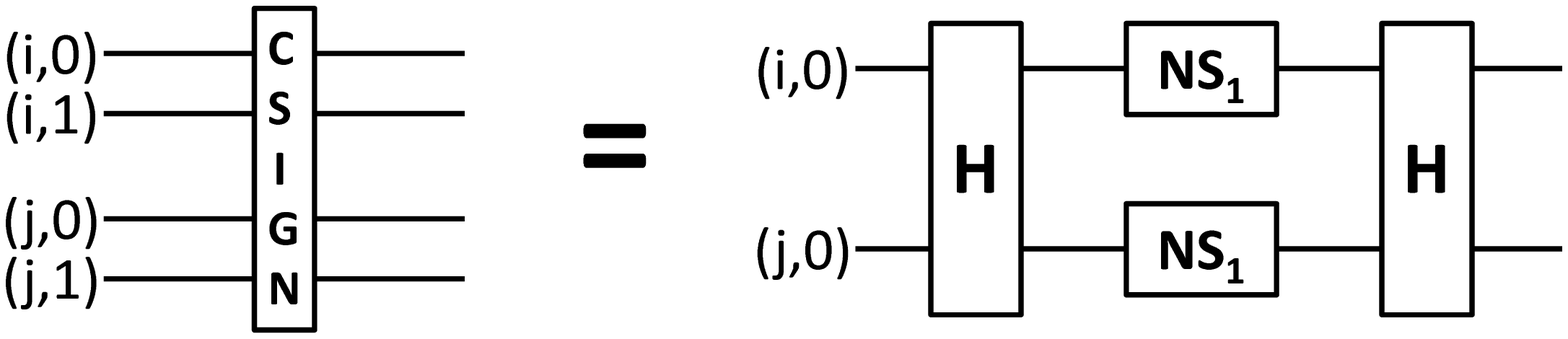}{\special{ language "Scientific Word";
%type "GRAPHIC";  maintain-aspect-ratio TRUE;  display "USEDEF";
%valid_file "F";  width 3.6149in;  height 0.7939in;  depth 0pt;
%original-width 10.6415in;  original-height 8.1327in;  cropleft "0.1039";
%croptop "0.8638";  cropright "0.9302";  cropbottom "0.6326";
%filename 'ns1fig.eps';file-properties "XNPEU";}}}%
%BeginExpansion
\begin{figure}[ptb]%
\centering
\includegraphics[
trim=1.105652in 5.144746in 0.742777in 1.107674in,
height=0.7939in,
width=3.6149in
]%
{ns1fig.eps}%
\caption{Simulating $\operatorname*{CSIGN}$\ by $\operatorname*{NS}%
\nolimits_{1}$\ and Hadamard.}%
\label{ns1fig}%
\end{figure}
%EndExpansion
One can check that this induces the following transformation on the state of
$\left(  i,0\right)  $\ and $\left(  j,0\right)  $:%
\begin{align*}
\left\vert 0,0\right\rangle  &  \rightarrow\left\vert 0,0\right\rangle \\
\left\vert 1,0\right\rangle  &  \rightarrow\frac{\left\vert 1,0\right\rangle
+\left\vert 0,1\right\rangle }{\sqrt{2}}\\
\left\vert 0,1\right\rangle  &  \rightarrow\frac{\left\vert 1,0\right\rangle
-\left\vert 0,1\right\rangle }{\sqrt{2}}\\
\left\vert 1,1\right\rangle  &  \rightarrow\frac{\left\vert 2,0\right\rangle
-\left\vert 0,2\right\rangle }{\sqrt{2}}%
\end{align*}
The key point is that we get a state involving $2$ photons in the same mode,
\textit{if and only if} the modes $\left(  i,0\right)  $\ and $\left(
j,0\right)  $\ \textit{both} contained a photon. \ Next, apply
$\operatorname*{NS}\nolimits_{1}$\ gates to both $\left(  i,0\right)  $ and
$\left(  j,0\right)  $. \ This flips the amplitude if and only if we started
with $\left\vert 1,1\right\rangle $. \ Finally, apply a second Hadamard
transformation to $\left(  i,0\right)  $\ and $\left(  j,0\right)  $, to
complete the implementation of $\operatorname*{CSIGN}$.

We now explain how to implement $\operatorname*{NS}\nolimits_{1}$ on a given
mode $i$, using postselection.\ \ To do so, we need two additional modes
$j$\ and $k$, which are initialized to the states $\left\vert 0\right\rangle $
and $\left\vert 1\right\rangle $\ respectively. \ First we apply the following
$3\times3$\ unitary transformation to $i,j,k$:%
\[
W:=\left(
\begin{array}
[c]{ccc}%
1-\sqrt{2} & \sqrt{\frac{3}{\sqrt{2}}-2} & \frac{1}{2^{1/4}}\\
\sqrt{\frac{3}{\sqrt{2}}-2} & \sqrt{2}-\frac{1}{2} & \frac{1}{2}-\frac
{1}{\sqrt{2}}\\
\frac{1}{2^{1/4}} & \frac{1}{2}-\frac{1}{\sqrt{2}} & \frac{1}{2}%
\end{array}
\right)  .
\]
Then we postselect on $j$\ and $k$\ being returned to the state $\left\vert
0,1\right\rangle $. \ As shown in \cite{klm}, this postselection always
succeeds with amplitude $1/2$\ (corresponding to probability $1/4$); and that
conditioned on it succeeding, the effect is to apply $\operatorname*{NS}%
\nolimits_{1}$\ in mode $i$. \ To prove this, observe that since the number of
photons is conserved, the effect of $W$\ on mode $i$ must have the form%
\[
\alpha_{0}\left\vert 0\right\rangle +\alpha_{1}\left\vert 1\right\rangle
+\alpha_{2}\left\vert 2\right\rangle \rightarrow\lambda_{0}\alpha
_{0}\left\vert 0\right\rangle +\lambda_{1}\alpha_{1}\left\vert 1\right\rangle
+\lambda_{2}\alpha_{2}\left\vert 2\right\rangle ,
\]
for some $\lambda_{0},\lambda_{1},\lambda_{2}$. \ Using formula (*), we then
calculate%
\begin{align*}
\lambda_{0}  &  =w_{33}=\frac{1}{2},\\
\lambda_{1}  &  =\operatorname*{Per}\left(
\begin{array}
[c]{cc}%
w_{11} & w_{13}\\
w_{31} & w_{33}%
\end{array}
\right)  =\frac{1}{2},\\
\lambda_{2}  &  =\frac{1}{2}\operatorname*{Per}\left(
\begin{array}
[c]{ccc}%
w_{11} & w_{11} & w_{13}\\
w_{11} & w_{11} & w_{13}\\
w_{31} & w_{31} & w_{33}%
\end{array}
\right)  =-\frac{1}{2}.
\end{align*}
This implies that the $\operatorname*{CSIGN}$\ circuit shown in Figure
\ref{ns1fig}\ succeeds with amplitude $1/4$\ (corresponding to probability
$1/16$), and furthermore, we know when it succeeds.
\end{proof}

In the proof of Theorem \ref{klmthm}, the main reason the matrix $W$\ looks
complicated is simply that it needs to be unitary. \ However, notice that
unitarity is irrelevant for our $\mathsf{\#P}$-hardness application---and if
we drop the unitarity requirement, then we can replace $W$\ by a simpler
$2\times2$\ matrix, such as%
\[
Y:=\left(
\begin{array}
[c]{cc}%
1-\sqrt{2} & \sqrt{2}\\
1 & 1
\end{array}
\right)  .
\]
To implement $\operatorname*{NS}\nolimits_{1}$ on a given mode $i$, we would
apply $Y$ to $i$ as well as another mode $j$ that initially contains one
photon, then postselect on $j$ still containing one photon after $Y$ is
applied. \ One can verify by calculation that the effect on mode $i$ is%
\[
\alpha_{0}\left\vert 0\right\rangle +\alpha_{1}\left\vert 1\right\rangle
+\alpha_{2}\left\vert 2\right\rangle \rightarrow\lambda_{0}\alpha
_{0}\left\vert 0\right\rangle +\lambda_{1}\alpha_{1}\left\vert 1\right\rangle
+\lambda_{2}\alpha_{2}\left\vert 2\right\rangle
\]
where $\lambda_{0}=\lambda_{1}=1$\ and $\lambda_{2}=-1$.

\section{Main Result\label{PROOF}}

In this section we deduce the following theorem, as a straightforward
consequence of Theorem \ref{klmthm}.

\begin{theorem}
\label{mainthm}The problem of computing $\operatorname*{Per}\left(  A\right)
$, given a matrix $A\in\mathbb{Z}^{N\times N}$ of $\operatorname*{poly}\left(
N\right)  $-bit integers written in binary, is $\mathsf{\#P}$-hard\ under
many-one reductions.
\end{theorem}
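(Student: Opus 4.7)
I will deduce Theorem~\ref{mainthm} by chaining three elementary reductions: from a canonical $\mathsf{\#P}$-complete problem to computing amplitudes of qubit circuits over $\mathcal{G}$; from Theorem~\ref{klmthm} to computing amplitudes of postselected linear-optics circuits; and from formula~(*) to computing a single permanent. For the first step I take as my $\mathsf{\#P}$-complete problem $\#f$, the number of satisfying assignments of a polynomial-size Boolean circuit $f$ on $n$ inputs. Using Lemma~\ref{csignlem} to realize Toffoli gates in $\mathcal{G}$, I compile $f$ into a reversible sub-circuit $R$ over $\mathcal{G}$ that writes $f(x)$ onto a dedicated flag qubit while returning all work qubits to $|0\rangle$. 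Setting $Q = H^{\otimes n}\, R^{-1}\, Z_{\mathrm{flag}}\, R\, H^{\otimes n}$ (identity elsewhere), a standard phase-kickback calculation gives
\[
\langle 0\cdots 0 | Q | 0\cdots 0\rangle = \frac{1}{2^n}\sum_{x\in\{0,1\}^n}(-1)^{f(x)} = \frac{2^n - 2\#f}{2^n},
\]
so computing this amplitude is $\mathsf{\#P}$-hard.

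\textbf{From qubit amplitudes to a single permanent.} Feed $Q$ into Theorem~\ref{klmthm}, using the simpler non-unitary matrix $Y$ (introduced at the end of Section~\ref{LOQC}) in place of $W$ inside each $\operatorname*{NS}\nolimits_{1}$ gadget. This produces a postselected linear-optics circuit $L$ with $\langle I|\varphi(L)|I\rangle = (2^n - 2\#f)/(2^n\cdot 4^\Gamma)$. Writing $L$ as an ordered product of its $1$- and $2$-mode gate matrices and invoking Lemma~\ref{homlem} repeatedly, $\varphi(L) = \varphi(M)$, where $M$ is the single $m\times m$ matrix obtained by multiplying out those gate matrices in order. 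Formula~(*) then yields
\[
\operatorname*{Per}(M_{I,I}) = \langle I|\varphi(M)|I\rangle = \frac{2^n - 2\#f}{2^n\cdot 4^\Gamma},
\]
where $M_{I,I}$ is the $n\times n$ submatrix of $M$ on the $n$ modes that $|I\rangle$ occupies. So the permanent of an explicit, polynomial-time-constructible matrix already determines $\#f$.

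\textbf{The main obstacle: integer entries.} Theorem~\ref{mainthm} demands $A\in\mathbb{Z}^{N\times N}$, whereas the entries of $M$ naturally lie in an algebraic extension of $\mathbb{Q}$: scaling each Hadamard by $\sqrt 2$ makes its entries $\pm 1$, the $Y$ matrix sits in $\mathbb{Z}[\sqrt 2]$, and the only genuinely awkward contribution is the matrix $B$ of Lemma~\ref{csignlem}, whose entries involve the quartic irrationals $\sqrt{2\pm\sqrt 2}$. Converting $M$ into a single polynomial-sized integer matrix $A$ with $\operatorname*{Per}(A)$ still encoding $\#f$ is where the real work lies: one must either (i)~replace the $B$-based Toffoli decomposition by one whose $1$-qubit gates have entries in $\mathbb{Z}[\sqrt 2]$ (say, by adding extra work qubits and emulating Toffoli from Hadamards, $\operatorname*{CSIGN}$s, and $\{0,\pm 1\}$-valued $1$-qubit gates), and then absorb the residual $\sqrt 2$ into extra optical modes so the effective matrix is integral, or (ii)~give a general algebraic argument reducing $\operatorname*{Per}$ over $\mathbb{Z}[\sqrt 2]$ (or $\mathbb{Z}[\sqrt{2\pm\sqrt 2}]$) to $\operatorname*{Per}$ over $\mathbb{Z}$ by a polynomial-time many-one reduction, exploiting the fact that the target value is known \emph{a priori} to be rational. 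This integrality step is the one part of the proof that does not follow immediately from the machinery assembled in Section~\ref{PRELIM}, and I expect it to be the main obstacle.
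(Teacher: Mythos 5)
Your first two steps track the paper's own proof almost exactly: the phase-kickback circuit with amplitude $(2^n-2\#f)/2^n$ is the paper's Lemma \ref{deltalem} (with $\Delta_C=2^n-2\#f$), and passing through Theorem \ref{klmthm}, the homomorphism property, and formula (*) to express that amplitude as the permanent of a single submatrix of the multiplied-out gate matrices is precisely what the paper does. (One bookkeeping slip: if you substitute the non-unitary $Y$ for $W$, each $\operatorname{NS}_1$ gadget succeeds with amplitude $1$ rather than $1/2$, so the $4^{\Gamma}$ attenuation should not appear in your amplitude formula; this is harmless either way.)

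The genuine gap is the final step, which you explicitly leave unresolved. Neither of your proposed remedies---redesigning the Toffoli decomposition to avoid the gate $B$, or an algebraic reduction from the permanent over $\mathbb{Z}[\sqrt{2}]$ to the permanent over $\mathbb{Z}$---is needed, and neither is carried out, so as written the proposal does not establish the theorem. The paper's resolution is a short precision argument: the permanent of a $d\times d$ matrix with entries of absolute value at most $1$ is a multilinear polynomial with $d!$ monomials, so truncating each entry to $b$ bits perturbs the permanent by at most $d!\cdot d/2^{b}$. Choosing $b=\lceil\log_{2}(d!)+2n+2\Gamma\rceil$ makes this perturbation less than half the spacing between consecutive achievable values of the amplitude (which are integer multiples of $1/(2^{n}4^{\Gamma})$), so rounding $2^{n}4^{\Gamma}\operatorname{Per}(\widetilde{V})$ recovers $\Delta_C$ exactly; then $A:=2^{b}\widetilde{V}$ is an integer matrix with polynomially many bits per entry and $\operatorname{Per}(A)=2^{bd}\operatorname{Per}(\widetilde{V})$, yielding the many-one reduction. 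The idea you are missing is that exact representability of the entries is irrelevant: since the target value is known a priori to lie on a coarse grid, you only need the permanent to polynomially many bits of precision, and the permanent's sensitivity to entrywise truncation is polynomially controlled. No gate-set surgery and no algebraic number theory are required.
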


In classical complexity theory, one is often more interested in various
corollaries of Theorem \ref{mainthm}: for example, that
computing\ $\operatorname*{Per}\left(  A\right)  $\ remains $\mathsf{\#P}%
$-hard\ even if $A$\ is a \textit{nonnegative} integer matrix, or a $\left\{
-1,0,1\right\}  $-valued matrix, or a $\left\{  0,1\right\}  $-valued matrix.
\ Valiant \cite{valiant}\ gave simple reductions by which one can deduce all
of these corollaries from Theorem \ref{mainthm}.\ \ We do not know how to use
the linear-optics perspective to get any additional insight into the corollaries.

Let $C$ be a classical circuit that computes a Boolean function $C:\left\{
0,1\right\}  ^{n}\rightarrow\left\{  -1,1\right\}  $, and let $\Delta
_{C}:=\sum_{x\in\left\{  0,1\right\}  ^{n}}C\left(  x\right)  $. \ Then
computing $\Delta_{C}$, given $C$ as input, is a $\mathsf{\#P}$-hard problem
essentially by definition. \ On the other hand, it is easy to encode
$\Delta_{C}$\ as an amplitude in a quantum circuit:

\begin{lemma}
\label{deltalem}There exists a classical algorithm that takes a circuit $C$ as
input, runs in $\operatorname*{poly}\left(  n,\left\vert C\right\vert \right)
$\ time, and outputs a (qubit-based) quantum circuit $Q$, consisting of gates
from $\mathcal{G}$, such that%
\[
\left\langle 0\cdots0\right\vert Q\left\vert 0\cdots0\right\rangle
=\frac{\Delta_{C}}{2^{n}}.
\]

\end{lemma}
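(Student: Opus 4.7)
The plan is to use the textbook quantum "sum over paths" trick: prepare a uniform superposition over inputs, phase each branch by $C(x)$, and then interfere via a final layer of Hadamards so that the amplitude on $\vert 0\cdots 0\rangle$ reads off $\Delta_{C}/2^{n}$. Concretely, I would write $C(x)=(-1)^{c(x)}$ for a Boolean function $c:\{0,1\}^{n}\to\{0,1\}$; the underlying circuit for $c$ is obtained from the given circuit for $C$ with only a trivial change of output convention, so its size is still $\operatorname{poly}(n,\vert C\vert)$.

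Next I would convert the classical circuit for $c$ into a reversible circuit $R$ over Toffoli and NOT gates, using the standard Bennett-style simulation: each AND/OR gate of $c$ is realized by a Toffoli into a fresh ancilla, each NOT by an $X$, so that
\[
R\,\vert x\rangle\vert 0\rangle_{w}\vert 0^{a}\rangle \;=\; \vert x\rangle\vert c(x)\rangle_{w}\vert g(x)\rangle
\]
for some garbage string $g(x)$, with $a$ polynomial in $\vert C\vert$. The Toffoli gate is available in our gate set $\mathcal{G}$ by Lemma~\ref{csignlem}, and $X$ is a $1$-qubit gate, so $R$ is a circuit over $\mathcal{G}$ of size $\operatorname{poly}(n,\vert C\vert)$.

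The circuit $Q$ I construct is then: (i) apply $H^{\otimes n}$ to the input register; (ii) apply $R$; (iii) apply $Z$ to the output qubit $w$; (iv) apply $R^{-1}$ to uncompute; (v) apply $H^{\otimes n}$ again. All of $H$, $X$, and $Z$ are $1$-qubit gates in $\mathcal{G}$, so $Q$ is a legal circuit over $\mathcal{G}$. Tracing states: after (i) we have $2^{-n/2}\sum_{x}\vert x\rangle\vert 0\rangle_{w}\vert 0^{a}\rangle$; after (ii), $2^{-n/2}\sum_{x}\vert x\rangle\vert c(x)\rangle_{w}\vert g(x)\rangle$; after (iii), $2^{-n/2}\sum_{x} C(x)\vert x\rangle\vert c(x)\rangle_{w}\vert g(x)\rangle$, since $Z$ is diagonal and multiplies $\vert c(x)\rangle$ by $(-1)^{c(x)}=C(x)$; after (iv), $2^{-n/2}\sum_{x} C(x)\vert x\rangle\vert 0\rangle_{w}\vert 0^{a}\rangle$. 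The final Hadamards in (v) send $\vert x\rangle\mapsto 2^{-n/2}\sum_{y}(-1)^{x\cdot y}\vert y\rangle$, so the amplitude on $\vert 0\cdots 0\rangle$ is $2^{-n}\sum_{x} C(x) = \Delta_{C}/2^{n}$, as required.

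There is no real obstacle here: the construction is a routine combination of reversible classical simulation, a phase kickback, and Hadamard interference. The only things to be careful about are that (a) the phase must be imprinted \emph{before} uncomputing, so that the uncomputation leaves the ancillas cleanly in $\vert 0\cdots 0\rangle$ and the phase sits on a term that then interferes nontrivially with $\vert 0\cdots 0\rangle$ under the final Hadamards, and (b) the reversible simulation must be done using only gates that $\mathcal{G}$ actually generates, which is guaranteed by Lemma~\ref{csignlem} together with the closure of $\mathcal{G}$ under arbitrary $1$-qubit gates.
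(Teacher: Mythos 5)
Your construction is correct and is essentially the paper's proof: the paper also writes $Q = H^{\otimes n} D_C H^{\otimes n}$ and implements the diagonal phase $D_C$ by reversibly computing $C(x)$ with Toffoli gates (via Lemma \ref{csignlem}), imprinting the sign, and uncomputing the history. Your explicit use of a $Z$ gate on the output qubit just makes concrete the phase step that the paper leaves implicit.
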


\begin{proof}
Let $D_{C}$\ be\ a $2^{n}\times2^{n}$\ diagonal unitary matrix whose $\left(
x,x\right)  $\ entry is $C\left(  x\right)  $. \ Then since the Toffoli gate
is universal for classical computation, a quantum circuit consisting of
$1$-qubit gates and\ Toffoli gates can easily apply $D_{C}$. \ To do so, one
uses the standard \textquotedblleft uncomputing\textquotedblright\ trick:%
\[
\left\vert x\right\rangle \rightarrow\left\vert x\right\rangle \left\vert
h_{C}\left(  x\right)  \right\rangle \rightarrow C\left(  x\right)  \left\vert
x\right\rangle \left\vert h_{C}\left(  x\right)  \right\rangle \rightarrow
C\left(  x\right)  \left\vert x\right\rangle ,
\]
where $h_{C}\left(  x\right)  $\ is the complete history of a computation
using Toffoli gates that produces $C\left(  x\right)  $. \ Now let
$F=H_{2}^{\otimes n}$\ be the quantum Fourier transform over $\mathbb{Z}%
_{2}^{n}$\ (i.e., the Hadamard gate applied to each of $n$\thinspace\ qubits),
and let $Q=FD_{C}F$. \ Then%
\[
\left\langle 0\right\vert ^{\otimes n}Q\left\vert 0\right\rangle ^{\otimes
n}=\left(  \frac{1}{\sqrt{2^{n}}}\sum_{x\in\left\{  0,1\right\}  ^{n}%
}\left\langle x\right\vert \right)  D_{C}\left(  \frac{1}{\sqrt{2^{n}}}%
\sum_{x\in\left\{  0,1\right\}  ^{n}}\left\vert x\right\rangle \right)
=\frac{\Delta_{C}}{2^{n}}.
\]
Finally, by Lemma \ref{csignlem}, we can simulate each of the Toffoli gates in
$Q$ using gates from the set $\mathcal{G}$.
\end{proof}

Let $Q$ be the quantum circuit from Lemma \ref{deltalem}, and assume $Q$ uses
$k=\operatorname*{poly}\left(  n,\left\vert C\right\vert \right)  $ qubits.
\ By Theorem \ref{klmthm}, we can simulate $Q$ by a linear-optics circuit $L$
such that%
\[
\left\langle I\right\vert \varphi\left(  L\right)  \left\vert I\right\rangle
=\frac{\left\langle 0\cdots0\right\vert Q\left\vert 0\cdots0\right\rangle
}{4^{\Gamma}},
\]
where $\Gamma=\operatorname*{poly}\left(  n,\left\vert C\right\vert \right)  $
is the number of $\operatorname*{CSIGN}$\ gates in $Q$. \ Furthermore, the
circuit $L$ uses $m:=2k+4\Gamma$\ optical modes. \ Let $U$\ be the $m\times
m$\ unitary matrix induced by $L$, and let $V$\ be the $\left(  m/2\right)
\times\left(  m/2\right)  $ submatrix of $U$\ obtained by taking the
even-numbered rows and columns only. \ Then we have%
\begin{align*}
\operatorname*{Per}\left(  V\right)   &  =\left\langle I\right\vert
\varphi\left(  L\right)  \left\vert I\right\rangle \\
&  =\frac{\left\langle 0\cdots0\right\vert Q\left\vert 0\cdots0\right\rangle
}{4^{\Gamma}}\\
&  =\frac{\Delta_{C}}{2^{n}4^{\Gamma}}%
\end{align*}
where the first line follows from formula (*) and the third from Lemma
\ref{deltalem}. \ Since $V$ can be produced in polynomial time given $C$, this
already shows that computing $\operatorname*{Per}\left(  V\right)  $\ to
sufficient precision is $\mathsf{\#P}$-hard.

However, we still need to deal with the issue that the entries of $V$ are real
numbers.\footnote{Indeed, the matrices that we multiply to obtain $U$ can be
\textit{complex} matrices, but $U$ itself (and hence the submatrix $V$) will
always be real.} \ Let $b:=\left\lceil \log_{2}\left(  n!\right)
+2n+2\Gamma\right\rceil $. \ Then notice that truncating the entries of $V$ to
$b$\ bits of precision produces a matrix $\widetilde{V}$\ such that%
\begin{align*}
\left\vert \operatorname*{Per}(\widetilde{V})-\operatorname*{Per}\left(
V\right)  \right\vert  &  \leq n!\left(  1-\left(  1-\frac{1}{2^{b}}\right)
^{n}\right) \\
&  \leq\frac{n!\cdot n}{2^{b}}\\
&  \leq\frac{1}{2^{n+2}4^{\Gamma}}%
\end{align*}
for sufficiently large $n$, and hence%
\[
\left\lfloor 2^{n}4^{\Gamma}\operatorname*{Per}(\widetilde{V})\right\rceil
=2^{n}4^{\Gamma}\operatorname*{Per}\left(  V\right)  =\Delta_{C}.
\]
For this reason, we can assume that each entry of $V$ has the form $k/2^{b}$
for some integer $k\in\left[  -2^{b},2^{b}\right]  $. \ Now set $A:=2^{b}V$.
\ Then $A$ is an integer matrix satisfying $\operatorname*{Per}\left(
A\right)  =2^{bn}\operatorname*{Per}\left(  V\right)  $, whose entries can be
specified using $b+O\left(  1\right)  =\operatorname*{poly}\left(
n,\left\vert C\right\vert \right)  $ bits each. \ This completes the proof of
Theorem \ref{mainthm}.

We conclude by noticing that our proof yields not only Theorem \ref{mainthm},
but also the following corollary:

\begin{corollary}
\label{apxcor}The problem of computing $\operatorname*{sgn}\left(
\operatorname*{Per}\left(  A\right)  \right)  :=\operatorname*{Per}\left(
A\right)  /\left\vert \operatorname*{Per}\left(  A\right)  \right\vert $,
given a matrix $A\in\mathbb{Z}^{N\times N}$ of $\operatorname*{poly}\left(
N\right)  $-bit integers written in binary, is $\mathsf{\#P}$-hard\ under
Turing reductions.
\end{corollary}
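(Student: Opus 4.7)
The plan is to reduce the $\#\mathsf{P}$-hard problem of computing $\Delta_C$ (for a Boolean circuit $C:\{0,1\}^n\to\{-1,1\}$) to a polynomial number of queries to the sign-of-permanent oracle, using binary search. From the proof of Theorem \ref{mainthm}, the reduction that produces the matrix $A(C)$ in fact yields $\left|\operatorname{Per}(A(C))-D\cdot\Delta_C\right|\le D/4$ for a known positive integer $D$ (a power of $2$, namely $D=2^{bN-n-2\Gamma}$), so $\operatorname{sgn}(\operatorname{Per}(A(C)))=\operatorname{sgn}(\Delta_C)$ whenever $\Delta_C\neq 0$. Thus a single sign query recovers $\operatorname{sgn}(\Delta_C)$.

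To recover the full integer value $\Delta_C\in[-2^n,2^n]$, I binary search over thresholds $t$. For each $t$ I build a new circuit $C_t$ on $n+2$ variables with $\Delta_{C_t}=2(\Delta_C-t)$: set $C_t(x,0,y_2):=C(x)$ (contributing $2\Delta_C$), and $C_t(x,1,y_2):=g_t(x,y_2)$, where $g_t$ outputs $+1$ on the first $2^n-t$ inputs in lexicographic order and $-1$ on the rest (contributing $-2t$). Provided $|t|\le 2^n$, this $g_t$ is computed by a circuit of size $\operatorname{poly}(n)$ via an integer comparison. Applying the sign oracle to $A(C_t)$ then returns $\operatorname{sgn}(\Delta_C-t)$ whenever $\Delta_C-t\neq 0$. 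I restrict the binary search to \emph{odd} thresholds $t$, which guarantees $\Delta_C-t$ is odd (since $\Delta_C$ is even for $n\ge 1$, being a sum of $2^n$ terms of $\pm 1$) and hence nonzero. Binary search across the admissible odd thresholds in $[-2^n-1,2^n+1]$ pinpoints $\Delta_C$ in $O(n)$ queries.

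The main (minor) obstacle is the parity bookkeeping: $\Delta_{C_t}$ must be even (to be realizable as a signed count over $2^{n+2}$ Boolean inputs), while $\Delta_C-t$ must be nonzero (so that the sign query is unambiguous). The factor of $2$ in $\Delta_{C_t}=2(\Delta_C-t)$, obtained by padding with two extra bits rather than one, reconciles these two constraints simultaneously. Everything else is the standard \textquotedblleft extract an integer from a sign oracle via binary search\textquotedblright\ template, and the overall reduction is polynomial time with $O(n)$ oracle calls, establishing Turing $\#\mathsf{P}$-hardness of $\operatorname{sgn}\circ\operatorname{Per}$.
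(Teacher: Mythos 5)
Your proof is correct and follows essentially the same route as the paper's: reduce to computing $\operatorname{sgn}(\Delta_{C})$ via the permanent reduction, then recover $\Delta_{C}$ by binary search over sign queries to shifted circuits. The only differences are matters of care rather than strategy---you spell out how to realize the shift $\Delta_{C}-t$ as an explicit Boolean circuit and use the odd-threshold/parity trick to avoid ever querying the sign of zero, points the paper's proof glosses over.
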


\begin{proof}
By the above equivalences, it suffices to show that computing
$\operatorname*{sgn}\left(  \Delta_{C}\right)  $\ is $\mathsf{\#P}$-hard.
\ This is true because, given the ability to compute $\operatorname*{sgn}%
\left(  \Delta_{C}\right)  $, we can determine $\Delta_{C}$ \textit{exactly}
using binary search. \ In more detail, given a positive integer $k$, let
$C\left[  k\right]  $\ denote the circuit $C$ modified to contain $k$
additional inputs $x$ such that $C\left(  x\right)  =1$, and let $C\left[
-k\right]  $ denote $C$ modified to contain $k$ additional $x$'s such that
$C\left(  x\right)  =-1$. \ Then clearly%
\begin{align*}
\Delta_{C\left[  k\right]  }  &  =\Delta_{C}+k,\\
\Delta_{C\left[  -k\right]  }  &  =\Delta_{C}-k.
\end{align*}
Thus we can use the following strategy: compute the signs of $\Delta_{C\left[
1\right]  },\Delta_{C\left[  -1\right]  },\Delta_{C\left[  2\right]  }%
,\Delta_{C\left[  -2\right]  },\Delta_{C\left[  4\right]  },\Delta_{C\left[
-4\right]  },$and so on, increasing $k$ by successive factors of $2$, until a
$k$ is found such that $\operatorname*{sgn}\left(  \Delta_{C\left[  k\right]
}\right)  \neq\operatorname*{sgn}\left(  \Delta_{C\left[  2k\right]  }\right)
$. \ At that point, we know that $\Delta_{C}$\ must be between $k$\ and $2k$.
\ Then by computing $\operatorname*{sgn}\left(  \Delta_{C\left[  3k/2\right]
}\right)  $, we can decide whether $\Delta_{C}$\ is between $k$\ and
$3k/2$\ or between $3k/2$\ and $2k$, and so on recursively until $\Delta_{C}%
$\ has been determined exactly.
\end{proof}

Corollary \ref{apxcor}\ implies, in particular, that approximating
$\operatorname*{Per}\left(  A\right)  $\ to within any multiplicative factor
is $\mathsf{\#P}$-hard---since to output a multiplicative approximation, at
the least we would need to know whether $\operatorname*{Per}\left(  A\right)
$\ is positive or negative.

Using a more involved binary search strategy (which we omit), one can show
that, for any $\beta\left(  N\right)  \in\left[  1,\operatorname*{poly}\left(
N\right)  \right]  $, even approximating $\left\vert \Delta_{C}\right\vert
$\ or $\Delta_{C}^{2}$ to within a multiplicative factor of $\beta\left(
N\right)  $\ would let one compute $\Delta_{C}$\ exactly, and is therefore
$\mathsf{\#P}$-hard under Turing reductions. \ It follows from this that
approximating $\left\vert \operatorname*{Per}\left(  A\right)  \right\vert
$\ or $\operatorname*{Per}\left(  A\right)  ^{2}$\ to within a multiplicative
factor of $\beta\left(  N\right)  $\ is $\mathsf{\#P}$-hard\ as well.
\ (Aaronson and Arkhipov \cite{aark} gave a related but more complicated proof
of the $\mathsf{\#P}$-hardness of approximating\ $\left\vert
\operatorname*{Per}\left(  A\right)  \right\vert $\ and $\operatorname*{Per}%
\left(  A\right)  ^{2}$,\ which did not first replace $\operatorname*{Per}%
\left(  A\right)  $\ with $\Delta_{C}$.)

\section{Acknowledgments}

I am grateful to Alex Arkhipov and Michael Forbes for helpful discussions, and
to Andy Drucker, Greg Kuperberg, Avi Wigderson, Ronald de Wolf, and the
anonymous reviewers for their comments.

\bibliographystyle{plain}
\bibliography{thesis}

\end{document}